\newtheorem{lemma}{Lemma}
\theoremstyle{definition}
\newtheorem{definition}{Definition}
\begin{document}

\title{WPaxos: Wide Area Network Flexible Consensus}

\author{
\IEEEauthorblockN{Ailidani Ailijiang,
Aleksey Charapko,
Murat Demirbas and
Tevfik Kosar}\\
\IEEEauthorblockA{Department of Computer Science and Engineering\\
University at Buffalo, SUNY\\
Email: \{ailidani,acharapk,demirbas,tkosar\}@buffalo.edu
}}

\IEEEtitleabstractindextext{%
\begin{abstract}
WPaxos is a multileader Paxos protocol that provides low-latency and high-throughput consensus across wide-area network (WAN) deployments. WPaxos uses multileaders, and partitions the object-space among these multileaders. Unlike statically partitioned multiple Paxos deployments, WPaxos is able to adapt to the changing access locality through object stealing. Multiple concurrent leaders coinciding in different zones steal ownership of objects from each other using phase-1 of Paxos, and then use phase-2 to commit update-requests on these objects locally until they are stolen by other leaders. To achieve fast phase-2 commits, WPaxos adopts the flexible quorums idea in a novel manner, and appoints phase-2 acceptors to be close to their respective leaders. We implemented WPaxos and evaluated it on WAN deployments across 5 AWS regions. The dynamic partitioning of the object-space and emphasis on zone-local commits allow WPaxos to significantly outperform both partitioned Paxos deployments and leaderless Paxos approaches.

\end{abstract}

\begin{IEEEkeywords}
Distributed systems, distributed applications, wide-area networks, fault-tolerance
\end{IEEEkeywords}}

\maketitle
\IEEEdisplaynontitleabstractindextext
\IEEEpeerreviewmaketitle

\IEEEraisesectionheading{\section{Introduction}\label{sec:intro}}


\IEEEPARstart{P}{axos} \cite{paxos} provides a formally-proven solution to the fault-tolerant distributed consensus problem. Notably, Paxos never violates the safety specification of distributed consensus (i.e., no two nodes decide differently), even in the case of fully asynchronous execution, crash/recovery of the nodes, and arbitrary loss of messages. When the conditions improve such that distributed consensus becomes solvable~\cite{flp,lamport1982generals}, Paxos also satisfies the progress property (i.e., nodes decide on a value as a function of the inputs). 
%
Paxos and its variants have been deployed widely, including in Chubby~\cite{chubby} based on Paxos~\cite{paxosmadecomplex}, Apache ZooKeeper~\cite{zookeeper} based on Zab~\cite{zab}, and etcd~\cite{etcd} based on Raft~\cite{raft}. 
These Paxos implementations depend on a centralized primary process (i.e., the leader) to serialize all commands. Due to this dependence on a single centralized leader, these Paxos implementations support deployments in local area and cannot deal with write-intensive scenarios across wide-area networks (WANs) well.
In recent years, however, coordination over wide-area networks (e.g., across zones, such as datacenters and sites) has gained greater importance, especially for database applications and NewSQL datastores~\cite{spanner,megastore,xie2014salt}, distributed filesystems~\cite{quintero2011, Mashtizadeh_2013, Grimshaw_2013}, and social networks~\cite{tao,cops}.

In order to eliminate the single leader bottleneck in Paxos, leaderless and multileader solutions were proposed. EPaxos \cite{epaxos} is a leaderless extension of the Paxos protocol where any replica at any zone can propose and commit commands opportunistically, provided that the commands are non-interfering. This opportunistic commit protocol requires an agreement from a fast-quorum of roughly 3/4th of the acceptors\footnote{For a deployment of size $2F+1$, fast-quorum is $F+\lfloor \frac{F+1}{2} \rfloor$}, which means that WAN latencies are still incurred. Moreover, if the commands proposed by multiple concurrent opportunistic proposers do interfere, the protocol requires performing a second phase to record the acquired dependencies, and agreement from a majority of the Paxos acceptors is needed.
Another way to eliminate the single leader bottleneck is to use a separate Paxos group deployed at each zone. Systems like Google Spanner \cite{spanner}, ZooNet \cite{zoonet}, and Bizur \cite{bizur} achieve this via a static partitioning of the global object-space to different zones, each responsible for a shard of the object-space. However, such static partitioning is inflexible and WAN latencies will be incurred persistently to access/update an object mapped to a different zone.

{\bf Contributions.}
We present {\em WPaxos}, a novel multileader Paxos protocol that provides low-latency and high-throughput consensus across WAN deployments.
WPaxos leverages the {\em flexible quorums}~\cite{fpaxos} idea to cut WAN communication costs. It deploys flexible quorums in a novel manner to appoint  {\em multiple concurrent leaders} across the WAN.
Unlike the FPaxos protocol~\cite{fpaxos} which uses a single-leader and does not scale to WAN distances, WPaxos uses multileaders and partitions the object-space among these multileaders. This allows the protocol to process requests for objects under different leaders concurrently. Each object in the system is maintained in its own commit log, allowing for per-object linearizability.
By strategically selecting the phase-2 acceptors to be close to the leader, WPaxos achieves fast commit decisions.
On the other hand, WPaxos differs from the existing static partitioned multiple Paxos deployment solutions because it implements a dynamic partitioning scheme: leaders coinciding in different zones steal ownership/leadership of an object from each other using phase-1 of Paxos, and then use phase-2 to commit update-requests on the object locally until the object is stolen by another leader.

With its multileader protocol, WPaxos guarantees linearizability per object.
We model WPaxos in TLA+/PlusCal~\cite{tla}
and present the algorithm using the PlusCal specification in Section~\ref{sec:algorithm}. The consistency properties of WPaxos are verified by model checking this specification\footnote{The TLA+ specification of WPaxos is available at \url{http://github.com/ailidani/paxi/tree/master/tla}}.

Since object stealing is an integrated part of phase-1 of Paxos, WPaxos remains simple as a pure Paxos flavor and obviates the need for another service/protocol. There is no need for a configuration service for relocating objects to zones as in Spanner~\cite{spanner} and vertical Paxos~\cite{verticalpaxos}.
Since the base WPaxos protocol guarantees safety to concurrency, asynchrony, and faults, the performance can be tuned orthogonally and aggressively, as we discuss in Section~\ref{sec:extensions}. To improve performance, we present a locality adaptive object stealing extension in Section~\ref{sec:adaptive}. 

To quantify the performance benefits from WPaxos, we implemented WPaxos in Go\footnote{The GO implementation of WPaxos is available at \url{http://github.com/ailidani/paxi}} and performed evaluations on WAN deployments across 5 AWS regions. Our results in Section~\ref{sec:eval} show that WPaxos outperforms EPaxos, achieving 15 times faster average request latency than EPaxos using a $\sim$70\% access locality workload in some regions. Moreover, for a $\sim$90\% access locality workload, WPaxos improves further and achieves 39 times faster average request latency than EPaxos in some regions. This is because, while the EPaxos opportunistic commit protocol requires about 3/4th of the Paxos acceptors to agree and incurs one WAN round-trip latency, WPaxos is able to achieve low latency commits using the zone-local phase-2 acceptors.
Moreover, WPaxos is able to maintain low-latency responses under a heavy workload: Under 10,000 requests/sec, using a $\sim$70\% access locality workload, WPaxos achieves 9 times faster average request latency and 54 times faster median latency than EPaxos.
Finally, we evaluate WPaxos with a shifting locality workload and show that WPaxos seamlessly adapts and significantly outperforms static partitioned multiple Paxos deployments.


While achieving low latency and high throughput, WPaxos also achieves seamless high-availability by having multileaders: failure of a leader is handled gracefully as other leaders can serve the requests previously processed by that leader via the object stealing mechanism. Since leader re-election (i.e., object stealing) is handled through the Paxos protocol, safety is always upheld to the face of node failure/recovery, message loss, and asynchronous concurrent execution.
While WPaxos helps most for slashing WAN latencies, it is also suitable for intra-datacenter deployments for its high-availability and throughput benefits.

\section{Related Work}
\label{sec:related}
Here we give an overview of core Paxos protocol and provide an architectural classification of algorithms in the Paxos family.

\subsection{Paxos Protocol}

\begin{figure}[t]
	\vspace*{-2mm}
	\centering
	\includegraphics[width=\columnwidth]{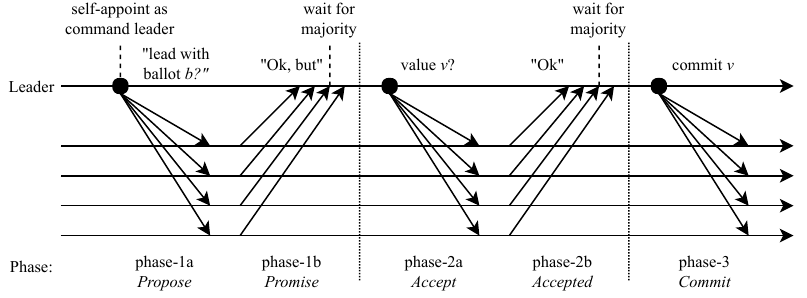}
	\caption{Overview of Paxos algorithm.}
	\label{fig:paxos}
	\vspace*{-2mm}
\end{figure}

Paxos separates its operation into three phases as illustrate in Figure~\ref{fig:paxos}.
%
In phase-1 a node proposes itself as a leader over some ballot $b$. Other nodes accept the proposal only if ballot $b$ is the highest they have seen so far. If some node has seen a leader with a greater ballot, that node will reject ballot $b$. Receiving a rejection fails the aspiring leader and causes it to start again with higher ballot. However, if the majority of the nodes accepts the leader, it will move to the phase-2 of the protocol. In phase-1, leader also learns uncommitted commands from earlier ballots to finish them later. 

In phase-2, the leader tells its followers to accept a command into their log. The command depends on the results of the prior phase, as the leader is obliged to finish highest ballot uncommitted command it may have learned earlier. Similarly, to phase-1, a leader requires a majority ack to complete phase-2. However, if a follower has learned of a higher ballot leader, it will not accept the command, and reject the leader, causing the leader to go back to leader election phase and retry. Once a majority of nodes ack to accept the command, the command becomes anchored and cannot be lost even in case of failures or leader changes, since that command is guaranteed to be learned by any future leader. 

Finally, Paxos commits the command in phase-3. In this phase, the leader sends a message to all followers to commit the command in their respective logs.

Many practical Paxos systems typically continue with the same leader for many rounds to avoid paying the cost of phase-1 repeatedly. This optimization, commonly known as Multi-Paxos~\cite{paxosmadecomplex}, iterates over phase-2 for different slots of the same ballot number. The safety is preserved as a new leader needs to obtain a majority of followers with a higher ballot, and this causes the original leader to get rejected in phase-2 and stops its progress.

\subsection{Paxos Variants}
Many Paxos variants optimize Paxos for specific needs, significantly extending the original protocol. We categorize non-byzantine consensus into five different classes as illustrated in Figure~\ref{fig:protocol_architectures}, and provide an overview of these protocol families.

\begin{figure}[!t]
\vspace*{-2mm}
\centering
\begin{subfigure}{0.23\textwidth}
\includegraphics[width=\linewidth]{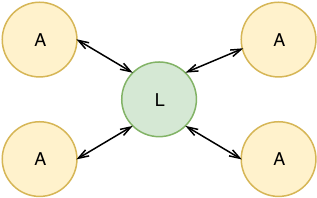}
\caption{Single-leader}
\label{fig:pax_single}
\end{subfigure}
\hspace*{\fill} 
\begin{subfigure}{0.23\textwidth}
\includegraphics[width=\linewidth]{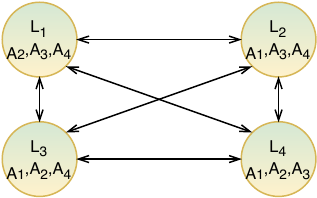}
\caption{Multi-leader}
\label{fig:pax_multi}
\end{subfigure}

\begin{subfigure}{0.23\textwidth}
\includegraphics[width=\linewidth]{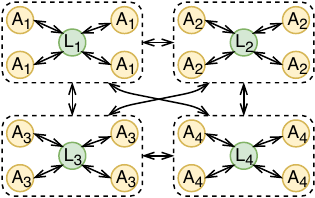}
\caption{Multi-leader multi-quorum}
\label{fig:pax_multiq}
\end{subfigure}
\hspace*{\fill} 
\begin{subfigure}{0.23\textwidth}
\includegraphics[width=\linewidth]{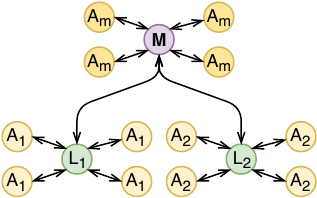}
\caption{Hierarchical}
\label{fig:pax_hierarchichal}
\end{subfigure}
\hspace*{\fill} 
\begin{subfigure}{0.23\textwidth}
\includegraphics[width=\linewidth]{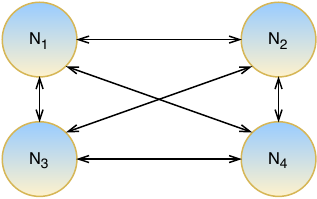}
\caption{Leaderless}
\label{fig:pax_leaderless}
\end{subfigure}
\hspace*{\fill} 
\caption{Overview of Paxos-based consensus protocol architectures. L designates leader nodes, and A is for acceptor nodes}
\label{fig:protocol_architectures}
\vspace*{-2mm}
\end{figure}

Single leader protocols, such as Multi-Paxos and Raft~\cite{raft} rely on one node to drive the progress forward (Figure~\ref{fig:pax_single}). Due to the overheads imposed by communications with all of the followers, the leader node often becomes a bottleneck in single leader protocols. Mencius~\cite{mencius} tries to reduce load imbalance by rotating the single leader in a round-robin fashion. 

Multi-leader protocols in Figure~\ref{fig:pax_multi}, such as $M^2$Paxos~\cite{m2paxos}, ZooNet~\cite{zoonet} operate on the observation that not all commands require to have a total order in the system. Multi-leader algorithms can run many commands in parallel at different leaders, as long as these commands belong to different conflict domains. This results in each leader node often serving as an acceptor to leaders of other conflict domains. Unlike single leader architectures that guarantee a single total order of commands in the system, multi-leader protocols provide a partial total order, where only commands belonging to the same conflict domain are ordered with respect to each other. The node despecialization allows multi-leader consensus to improve resource utilization in the cluster by spreading the load more evenly.

WPaxos goes one step further and allows different leaders to use different quorums, depicted in Figure~\ref{fig:pax_multiq}, as long as inter-quorum communication can ensure required safety properties. Such multi-leader, multi-quorum setup helps with both WAN latency and throughput due to smaller and geographically localized quorums.
The DPaxos data-management/replication protocol~\cite{dpaxos} cites our original WPaxos technical report~\cite{wpaxosRep} and adopts a similar protocol for the edge computing domain to bring highly granular high access locality data to the consumers at the edge.

Hierarchical multi-leader protocols, such as WanKeeper~\cite{wankeeper} and Vertical Paxos~\cite{vpaxos}, establish a chain of command between the leaders or quorums. A higher-level leader oversees and coordinates the lower-level children leaders/quorums. In a two-layer WanKeeper, master leader is responsible for assigning conflict domain ownership to lower-level leaders. Additionally, the master leader also handles operations that are of high demand by many lower-level quorums as to avoid changing the leadership back-and-forth. In Vertical Paxos, the master cluster is responsible for overseeing the lower-level configurations and does not participate in handling actual commands from the clients. Compared to flat multi-quorum setup of the WPaxos, hierarchical composition has quorum specialization, since master quorum is responsible for different or additional work compared to its children quorums.

Leaderless solutions in Figure~\ref{fig:pax_leaderless} also build on the idea of parallelizing the execution of non-conflicting commands. Unlike multi-leader approaches, however, leaderless systems, such as EPaxos~\cite{epaxos} do not impose the partitioning of conflict domains between nodes, and instead try to opportunistically commit any command at any node. Any node in EPaxos becomes an opportunistic leader for a command and tries to commit it by running a phase-2 of Paxos in a fast quorum system. If some other node in the fast quorum is also working on a conflicting command, then an additional round of communication is used to establish order on the conflicting commands.

\section{WPaxos Overview}
\label{sec:protocol}



We assume a set of \emph{nodes} communicating through message passing in an asynchronous environment. The nodes are deployed in a set of \emph{zones}, which are the unit of availability isolation. Depending on the deployment, a zone can range from a cluster or datacenter to geographically isolated regions. Each node is identified by a tuple consisting of a zone ID and node ID, i.e. $Nodes \triangleq 1..Z \times 1..N$.

Every node maintains a sequence of instances ordered by an increasing \emph{slot} number. Every instance is committed with a \emph{ballot} number. Each ballot has a unique leader. Similar to Paxos implementation~\cite{paxosmadecomplex}, we construct the ballot number as lexicographically ordered pairs of an integer and its leader identifier, s.t. $Ballots \triangleq Nat \times Nodes$. Consequently, ballot numbers are unique and totally ordered, and any node can easily retrieve the id of the leader from a given ballot.


\subsection{WPaxos Quorums}
\label{sec:quorum}

WPaxos leverages on the flexible quorums idea \cite{fpaxos}. This result shows that we can weaken Paxos' ``all quorums should intersect'' assertion to instead ``only quorums from different phases should intersect''. That is, majority quorums are not necessary for Paxos, provided that phase-1 quorums ($Q_1$) intersect with phase-2 quorums ($Q_2$). Flexible-Paxos, i.e., FPaxos, allows trading off $Q_1$ and $Q_2$ sizes to improve performance. Assuming failures and resulting leader changes are rare, phase-2 (where the leader tells the acceptors to decide values) is run more often than phase-1 (where a new leader is elected). Thus it is possible to improve performance of Paxos by reducing the size of $Q_2$ at the expense of making the infrequently used $Q_1$ larger.

\begin{definition}
\label{def:quorum}
A {quorum system} over the set of nodes is \emph{safe} if the quorums used in phase-1 and phase-2, named $Q_1$ and $Q_2$, intersect. That is, $\forall q_1 \in Q_1, q_2 \in Q_2 : q_1 \cap q_2 \neq \emptyset$.
\end{definition}


WPaxos adopts the flexible quorum idea to WAN deployments. Our quorum system derives from the grid quorum layout, shown in Figure \ref{fig:grid}, in which rows and columns act as $Q_1$ and $Q_2$ quorums respectively. An attractive property of this grid quorum arrangement is $Q_1+Q_2$ does not need to exceed $N$, the total number of acceptors, in order to guarantee intersection of any $Q_1$ and $Q_2$. Let $q_1, q_2$ denote one specific instance in $Q_1$ and $Q_2$. Since $q_1 \in Q_1$ are chosen from rows and $q_2 \in Q_2$ are chosen from columns, any $q_1$ and $q_2$ are guaranteed to intersect even when $|q_1+q_2| < N$.

\begin{figure}[!t]
\centering
	\begin{subfigure}{0.23\textwidth}
	\includegraphics[width=\linewidth]{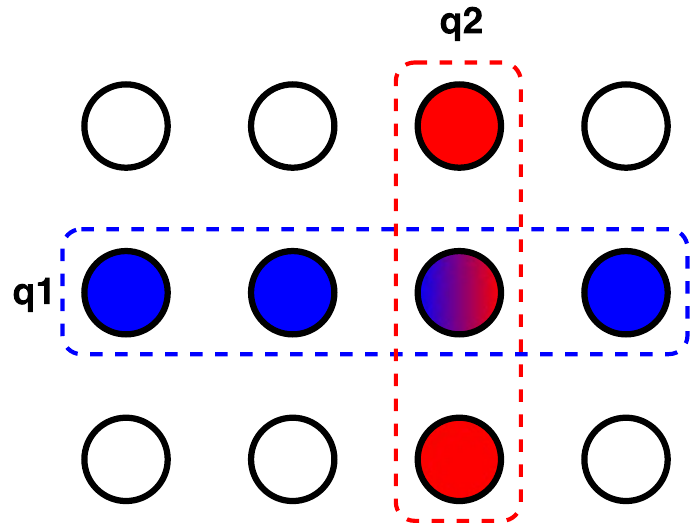}
	\caption{}
	\label{fig:grid}
	\end{subfigure}
\hspace*{\fill} 
	\begin{subfigure}{0.23\textwidth}
	\includegraphics[width=\linewidth]{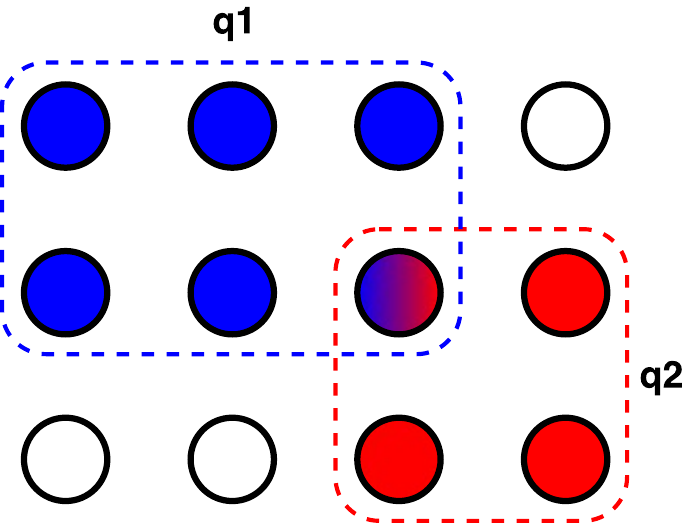}
	\caption{}
	\label{fig:gridb}
	\end{subfigure}
\caption{(a) 4-by-3 grid with a $Q_1$ quorum in rows and a $Q_2$ quorum in columns when $f_n=f_z=0$ (b) 4-by-3 grid with a $Q1$ and $Q2$ quorum when $f_n=f_z=1$.}
\label{fig:quorums}
\end{figure}

In WPaxos quorums, each column represents a zone and acts as a unit of availability or geographical partitioning. The collection of all zones form a grid. In this setup, we further generalize the grid quorum constraints in both $Q_1$ and $Q_2$ to achieve a more fault-tolerant and flexible alternative. Instead of using rigid grid columns, we introduce two parameters: $f_z$, the number of zone failures tolerated, and $f_n$, the number of node failures a zone can tolerate before losing availability.

In order to tolerate $f_n$ crash failures in every zone, WPaxos picks $f_n+1$ nodes in a zone over $l$ nodes, regardless of their row position. In addition, to tolerate $f_z$ \emph{zone failures} within $Z$ zones, $q_1 \in Q_1$ is selected from $Z-f_z$ zones, and $q_2 \in Q_2$ from $f_z+1$ zones.
Below, we formally define WPaxos quorums in TLA+~\cite{tla} and prove that the $Q1$ and $Q2$ quorums always intersect.
{\footnotesize

\begin{align*}
	Q_1 \triangleq \{ & q \in \textbf{SUBSET } \text{Nodes} : \footnotemark \\
                          &\text{Cardinality}(q) = (f_n+1) \times (Z - f_z) \wedge \\
                          &\neg \exists k \in  \textbf{SUBSET } q : \forall i, j \in k : i[1] = j[1] \wedge \text{Cardinality}(k) > {f_n\!+\!1} \}
\end{align*}
\begin{align*}
	Q_2 \triangleq \{ & q \in \textbf{SUBSET } \text{Nodes} : \\
                          & \text{Cardinality}(q) = (l-f_n) \times (f_z + 1) \wedge \\
                            & \neg \exists k \in  \textbf{SUBSET } q : \forall i, j \in k : i[1] = j[1] \wedge \text{Cardinality}(k) > {l\!-\!f_n} \}
\end{align*}
}
\footnotetext{\textbf{SUBSET } S is the set of subsets of S}

\begin{lemma}
	WPaxos $Q_1$ and $Q_2$ quorums satisfy intersection requirement (Definition \ref{def:quorum}).
\end{lemma}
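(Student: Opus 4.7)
The plan is to prove intersection by a two-level pigeonhole argument, matching the two-level structure of the quorum definitions: one level over zones, one level over nodes within a zone.

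First, I would reduce the claim to two independent counting facts using the node naming convention $Nodes \triangleq 1..Z \times 1..N$, so that $i[1]$ denotes the zone id of a node $i$. Let $q_1 \in Q_1$ and $q_2 \in Q_2$ be arbitrary, and let $Z(q) \triangleq \{ i[1] : i \in q\}$ be the set of zones touched by $q$. From the definition, $|Z(q_1)| = Z - F$ and $|Z(q_2)| = F + 1$, while the total number of zones is $Z$. Since $(Z - F) + (F + 1) = Z + 1 > Z$, the inclusion-exclusion / pigeonhole principle gives
\begin{equation*}
|Z(q_1) \cap Z(q_2)| \;\geq\; (Z - F) + (F + 1) - Z \;=\; 1,
\end{equation*}
so there is at least one zone $z^\star$ that contributes to both $q_1$ and $q_2$.

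Second, I would zoom into $z^\star$. The quorum definitions require that for every zone actually used, a full \textit{vertical} set of cardinality $f+1$ is included; so within $z^\star$ the restrictions $q_1 \cap (\{z^\star\} \times 1..N)$ and $q_2 \cap (\{z^\star\} \times 1..N)$ each have size $f+1$. Since $z^\star$ contains only $N = 2f + 1$ nodes, the same majority argument gives a node in common:
\begin{equation*}
|(q_1 \cap q_2) \cap (\{z^\star\} \times 1..N)| \;\geq\; 2(f+1) - (2f+1) \;=\; 1.
\end{equation*}
Combining the two, $q_1 \cap q_2 \neq \emptyset$, which is exactly Definition~\ref{def:quorum}.

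I do not expect any real obstacle; the two arithmetic steps are the entire argument. The only point worth being careful about is confirming from the TLA+ specification that the \emph{vertical} requirement really forces $f+1$ nodes inside every selected zone (as opposed to merely $f+1$ nodes somewhere collectively per zone), because the within-zone pigeonhole step relies on this. Once that is checked, no case analysis on $f$ or $F$ is needed, and the statement holds uniformly for all admissible parameters with $N = 2f+1$ and $F < Z$.
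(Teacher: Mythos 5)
Your proposal is correct and follows essentially the same two-step pigeonhole argument as the paper: first intersecting the zone sets (since $(Z-F)+(F+1)=Z+1>Z$), then intersecting the two $(f{+}1)$-node majorities within the common zone of $2f+1$ nodes. Your extra check that the \emph{vertical} constraint forces a full majority inside every selected zone is a sound reading of the TLA+ definitions and just makes explicit what the paper's proof assumes.
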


\begin{proof}
	(1) WPaxos $q_1$s involve $Z-f_z$ zones and $q_2$s involve $f_z+1$ zones, since $Z-f_z+f_z+1 = Z+1 > Z$, there is at least one zone selected by both quorums.
	(2) Within the common zone, $q_1$ selects $f_n+1$ nodes and $q_2$ selects $l-f_n$ nodes out of $l$ nodes forming a zone. Since $l-f_n+f_n+1 > l$, there is at least one node in the intersection.
\end{proof}

Figure~\ref{fig:grid} shows a 4-by-3 grid with $f_n=f_z=0$ and
Figure~\ref{fig:gridb} shows a 4-by-3 with $f_n=f_z=1$. In the latter deployment, each zone has 3 nodes, and each $q_2$ includes 2 out of 3 nodes from 2 zones. The $q_1$ quorum spans 3 out of 4 zones and includes any 2 nodes from each zone. Using a 2 row $q_1$ rather than 1 row $q_1$ has negligible effect on the performance (as we show in Section~\ref{sec:eval}) and provides more fault-tolerance.


When using grid-based flexible quorums (as opposed to unified majority quorums), the total number of faults tolerated, $F$, becomes topology-dependent. In EPaxos quorums are selected from a set so it is not important which nodes you pick. In WPaxos quorums are selected from a grid, so the location of the nodes picked becomes important.

We define $F_{min}$ to be the size of F in the worst possible case of failure placement so as to violate availability with respect to read/write operations on any item. If all the faults conspire to target a particular $q \in Q2$, then after Cardinality($Q2$) faults, the $q \in Q2$ is wiped off. By definition any $Q2$ quorum intersects with any $Q1$ quorum. By wiping off a $Q2$ quorum in its entirety, the faults made any $Q1$ quorum unavailable as well. To account for the case where Cardinality of $Q1$ quorum could be less than that of $Q2$, we make the formula symmetric and define as follows. 

{\footnotesize
$F_{min} = Min (\text{Cardinality}(Q2), \text{Cardinality}(Q1)) -1$
}

We define $F_{max}$ to be the size of $F$ in the best possible case of failure placement in the grid. In this case, the faults miss a union of a $Q1$ quorum and $Q2$ quorum, leaving at least one $Q1$ and $Q2$ quorum intact so WPaxos can continue operating. Note that the $Q2$ quorum may be completely embedded inside the $Q1$ quorum (or vice versa if $Q1$ quorums are smaller than $Q2$ quorums). So the formula is derived as subtracting from N, the cardinality of $Q1$ and $Q2$, and by adding the maximum cardinality of intersection of $Q1$ and $Q2$.

{\footnotesize
$F_{max} = N\!-\!\text{Cardinality}(Q1)\!-\!\text{Cardinality}(Q2)\!+\! (f_z\!+\!1)\!*\!(f_n\!+\!1)$ 
}

For the 4-by-3 grid with $f_n=f_z=0$ in Figure~\ref{fig:grid}, $F_{min}=2$ and $F_{max}=6$. For the deployment in Figure~\ref{fig:gridb} with $f_n=f_z=1$, $F_{min}=3$, and $F_{max}=6$.
\footnote{For a deployment of size 2F+1, fast-quorum is of size F+(F+1)/2. Therefore for N=12 and F=5 EPaxos fast quorum is 8, and EPaxos can tolerate upto 4 failures before the fast quorum size is breached. After 4 failures, EPaxos operations detoriate because they time out waiting response from the nonexisting fast quorum, and then proceed to go to the second round to be able to make progress with majority nodes. Progress is still possible until 5 node failures. The EPaxos paper suggests a reconfiguration to be invoked upon node failures to reduce N, and shrink the fast quorum size.}

\subsection{Multi-leader}
\label{sec:multi-leader}

In contrast to FPaxos~\cite{fpaxos} which uses flexible quorums with a classical single-leader Paxos protocol, WPaxos presents a multi-leader protocol over flexible quorums. Every node in WPaxos can act as a leader for a subset of objects in the system. This allows the protocol to process requests for objects under different leaders concurrently. Each object in the system is allotted its own commit log, allowing for per-object linearizability. A node can lead multiple objects at once, all of which may have different ballot and slot numbers in their corresponding logs.

The WPaxos protocol consists of two phases. The concurrent leaders \emph{steal} ownership/leadership of objects from each other using phase-1 of Paxos executed over $q_1 \in Q_1$. Then phase-2 commits the update-requests to the object over $q_2 \in Q_2$, selected from the leader's zone (and nearby zones) for improved locality. The leader  can execute phase-2 multiple times until some other node steals the object.

The phase-1 of the protocol starts only when the node needs to steal an object from a remote leader or if a client has a request for a brand new object that is not in the system. This phase of the algorithm causes the ballot number to grow for the object involved.
After a node becomes the owner/leader for an object, it repeats phase-2 multiple times on that object for committing commands/updates, incrementing the slot number at each iteration, while the ballot number for the object stays the same.

Figure~\ref{fig:algo} shows the normal operation of both phases, and also references each operation to the algorithms in Section~\ref{sec:algorithm}.


\begin{figure}[t]
\centering
\includegraphics[width=\columnwidth]{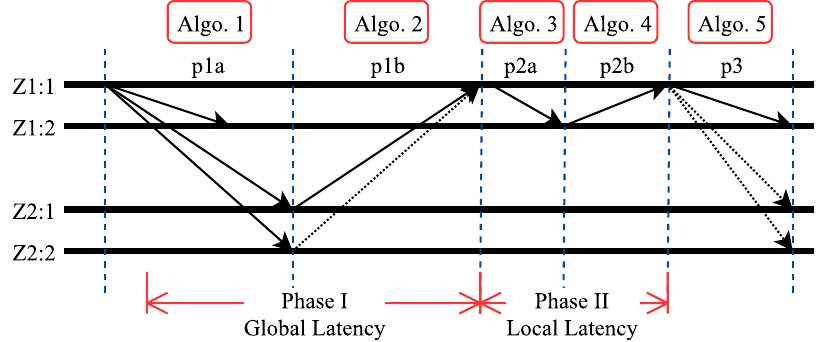}
\caption{Normal case messaging flow}
\label{fig:algo}
\end{figure}

\subsection{Object Stealing}
\label{sec:steal}

When a node needs to steal an object from another leader in order to carry out a client request, it first consults its internal cache to determine the last ballot number used for the object and performs phase-1 on some $q_1 \in Q_1$ with a larger ballot. Object stealing is successful if the candidate node can out-ballot the existing leader. This is achieved in just one phase-1 attempt, provided that the local cache is current and a remote leader is not engaged in another phase-1 on the same object.

Once the object is stolen, the old leader cannot act on it, since the object is now associated with a higher ballot number than the ballot it had at the old leader. This is true even when the old leader was not in the $q_1$ when the key was stolen, because the intersected node in $q_2$ will reject any object operations attempted with the old ballot. The object stealing  may occur when some commands for the objects are still in progress, therefore, a new leader must recover any accepted, but not yet committed commands for the object.

WPaxos maintains separate ballot numbers for all objects isolating the effects of object stealing. Keeping per-leader ballot numbers, i.e., keeping a single ballot number for all objects maintained by the leader, would necessitate out-balloting all objects of a remote leader when trying to steal one object. This would then create a leader dueling problem in which two nodes try to steal different objects from each other by constantly proposing a higher ballot than the opponent.
Using separate ballot numbers for each object alleviates ballot contention, although it can still happen when two leaders are trying to take over the same object currently owned by a third leader. To mitigate that issue, we use two additional safeguards: (1) resolving ballot conflict by zone ID and node ID in case the ballot counters are the same, and (2) implementing a random back-off mechanism in case a new dueling iteration starts anyway. 

Object stealing is part of core WPaxos protocol. In contrast to the simplicity and agility of object stealing in WPaxos, object relocation in other systems require integration of another service, such as movedir in Spanner~\cite{spanner}, or performing multiple reconfiguration or coordination steps as in Vertical Paxos~\cite{verticalpaxos}.
Vertical Paxos depends on a reliable master service that overseeing configuration changes. Object relocation involves configuration change in the node responsible for processing commands on that object. When a node in a different region attempts to steal the object, it must first contact the reconfiguration master to obtain the current ballot number and next ballot to be used. The new leader then must complete phase-1 of Paxos on the old configuration to learn the previous commands. Upon finishing the phase-1, the new leader can commit any uncommitted slots with its own set of acceptors. At the same time the new leader notifies the master of completing phase-1 with its ballot. Only after the master replies and activates the new configuration, the leader can start serving user requests. This process can be extended to multiple objects, by keeping track of separate ballot numbers for each object. Vertical Paxos requires three separate WAN communications to change the leadership, while WPaxos can do so with just one WAN communication.

\section{WPaxos Algorithm}
\label{sec:algorithm}


In the basic algorithm, every node maintains a set of variables and a sequence of commands written into the command \emph{log}. The command log can be committed out of order, but has to be executed against the state machine in the same order without any gap. Every command accesses only one object $o$. Every node leads its own set of objects in a set called $own$.


\begin{algorithm}[ht]
\caption*{\textbf{process}(self $\in$ Nodes) Initialization}
\footnotesize
\begin{algorithmic}[1]
\Statex \textbf{variables}
\State $ballots = [o \in Objects \mapsto \langle 0, self \rangle];$ 
\State $slots = [o \in Objects \mapsto 0];$ 
\State $own = \{\}$ 
\State $
	\begin{aligned}[t]
		log = &[o \in Objects \mapsto \\
		&[s \in Slots \mapsto \\
		&[b \mapsto 0, v \mapsto \langle\rangle, c \mapsto \text{FALSE}]]];
	\end{aligned}$
\end{algorithmic}
\end{algorithm}

All nodes in WPaxos initialize their state with above variables. We assume no prior knowledge of the ownership of the objects; a user can optionally provide initial object assignments. The highest known ballot numbers for objects are constructed by concatenating counter=0 and the node ID (line 1). The slot numbers start from zero (line 2), and the objects self owned is an empty set (line 3). Inside the log, an instance contains three components, the ballot number $b$ for that slot, the proposed command/value $v$ and a flag $c$ indicates whether the instance is committed (line 4).

\subsection{Phase-1: Prepare}

\begin{algorithm}[ht]
\caption{Phase-1a}
\label{alg:p1a}
\footnotesize
\begin{algorithmic}[1]
\State \textbf{macro} p1a () \{
	\State \indent \textbf{with} ($o \in Objects$) \{
	\State \indent\indent \textbf{await} $o \notin own$;
	\State \indent\indent $ballots[o] := \langle ballots[o][1]+1, self\rangle$;
	\State \indent\indent 
	$\openup-1.5\jot
	\begin{aligned}[t]
		Send([type &\mapsto \textbf{``1a''}, \\
		n &\mapsto self, \\
		o &\mapsto o, \\
		b &\mapsto ballots[o]]); \} \}
	\end{aligned}$
\end{algorithmic}
\end{algorithm}

WPaxos starts with a client sending requests to one of the nodes. A client typically chooses a node in the local zone to minimize the initial communication costs. The request message includes a command and some object $o$ on which the command needs to be executed. Upon receiving the request, the node checks if the object exists in the set of $own$, and start phase-1 for any new objects by invoking \textbf{p1a()} procedure in Algorithm~\ref{alg:p1a}. If the object is already owned by this node, the node can directly start phase-2 of the protocol.
In p1a(), a larger ballot number is selected and \textbf{``1a''} message is sent to a $Q_1$ quorum.

\begin{algorithm}[ht]
\caption{Phase-1b}
\label{alg:p1b}
\footnotesize
\begin{algorithmic}[1]
\State \textbf{macro} p1b () \{
	\State \indent \textbf{with} ($m \in msgs$) \{
		\State \indent\indent \textbf{await} $m.type = \textbf{``1a''}$;
		\State \indent\indent \textbf{await} $m.b \succeq ballots[m.o]$;
		\State \indent\indent $ballots[m.o] := m.b$;
		\State \indent\indent \textbf{if} ($o \in own$) $own := own \setminus \{m.o\}$;
		\State \indent\indent 
		$\openup-1.5\jot
		\begin{aligned}[t]
			Send([type &\mapsto \textbf{``1b''}, \\
			n &\mapsto self, \\
			o &\mapsto m.o, \\
			b &\mapsto m.b, \\
			s &\mapsto slots[m.o]]); \} \}
		\end{aligned}$
\end{algorithmic}
\end{algorithm}

The \textbf{p1b()} procedure processes the incoming \textbf{``1a''} message sent during phase-1 initiation. A node can accept the sender as the leader for object $o$ only if the sender's ballot $m.b$ is greater or equal to the ballot number it knows of (line 4). If object $o$ is owned by current node, it is removed from set $own$ (line 6). Finally, the \textbf{``1b''} message acknowledging the accepted ballot number is send (line 7). The highest slot associated with $o$ is also attached to the reply message, so that any unresolved commands can be committed by the new leader.

\subsection{Phase-2: Accept}

Phase-2 of the protocol starts after the completion of phase-1 or when it is determined that no phase-1 is required for a given object. WPaxos carries out this phase on a $Q_2$ quorum residing in the closest $F+1$ zones, thus all communication is kept local, greatly reducing the latency.

\begin{algorithm}[ht]
\caption{Phase-2a}
\label{alg:p2a}
\footnotesize
\begin{algorithmic}[1]
	\State $\openup-1.5\jot
	\begin{aligned}[t]
		Q_1\text{Satisfied}(o, b) \triangleq & \exists q \in Q_1 : \forall n \in q : \exists m \in msgs : \\
		& \land m.type = \textbf{``1b''} \\
		& \land m.o = o \\
		& \land m.b = b \\
		& \land m.n = n
	\end{aligned}$
\Statex
\State \textbf{macro} p2a () \{
	\State \indent \textbf{with} ($m \in msgs$) \{
		\State \indent\indent \textbf{await} $m.type = \textbf{``1b''}$;
		\State \indent\indent \textbf{await} $m.b = \langle ballots[m.o][1], self\rangle$;
		\State \indent\indent \textbf{await} $m.o \notin own$;
		\State \indent\indent \textbf{if} ($Q_1$Satisfied($m.o, m.b$)) \{
		\State \indent\indent\indent $own := own \cup \{m.o\}$;
		\State \indent\indent\indent $slots[m.o] := slots[m.o] + 1$;
		\State \indent\indent\indent $\openup-1.5\jot
		\begin{aligned}[t]
			log[m.o][slots[m.o]] := [b &\mapsto m.b, \\
			v &\mapsto \langle slots[m.o], self \rangle, \\
			c &\mapsto \text{FALSE}];
		\end{aligned}$
		\State \indent\indent\indent $\openup-1.5\jot
		\begin{aligned}[t]
			Send([type &\mapsto \textbf{``2a''}, \\
			n &\mapsto self, \\
			o &\mapsto m.o, \\
			b &\mapsto m.b, \\
			s &\mapsto slots[m.o], \\
			v &\mapsto \langle slots[m.o], self \rangle]); \} \} \}
		\end{aligned}$
\end{algorithmic}
\end{algorithm}

Procedure \textbf{p2a()} in Algorithm~\ref{alg:p2a} collects the ``1b'' messages for itself (lines 4-6). The node becomes the leader of the object only if $Q_1$ quorum is satisfied (line 7,8). The new leader then recovers any uncommitted slots with suggested values and starts the accept phase for the pending requests that have accumulated in queue. Phase-2 is launched by increasing the highest slot (line 9), and creates new entry in log (line 10), sending \textbf{``2a''} message (line 11).

\begin{algorithm}[ht]
\caption{Phase-2b}
\label{alg:p2b}
\footnotesize
\begin{algorithmic}[1]
\State \textbf{macro} p2b () \{
	\State \indent \textbf{with} ($m \in msgs$) \{
		\State \indent\indent \textbf{await} $m.type = \textbf{``2a''}$;
		\State \indent\indent \textbf{await} $m.b \succeq ballots[m.o]$;
		\State \indent\indent $ballots[m.o] := m.b$;
		\State \indent\indent $log[m.o][m.s] := [b \mapsto m.b, v \mapsto m.v, c \mapsto \text{FALSE}]$;
		\State \indent\indent $\openup-1.5\jot
		\begin{aligned}[t]
			Send([type &\mapsto \textbf{``2b''}, \\
			n &\mapsto self, \\
			o &\mapsto m.o, \\
			b &\mapsto m.b, \\
			s &\mapsto m.s]); \} \}
		\end{aligned}$
\end{algorithmic}
\end{algorithm}

Once the leader of the object sends out the ``2a'' message at the beginning of phase-2, the replicas respond to this message as shown in Algorithm~\ref{alg:p2b}. The leader node updates its instance at slot $m.s$ only if the message ballot $m.b$ is greater or equal to accepted ballot (line 4-6).

\subsection{Phase-3: Commit}

The leader collects replies from its $Q_2$ acceptors. The request proposal either gets committed with replies satisfying a $Q_2$ quorum, or aborted if some acceptors reject the proposal citing a higher ballot number. In case of rejection, the node updates a local ballot and puts the request in this instance back to main request queue to retry later.

\begin{algorithm}[h]
\caption{Phase-3}
\label{alg:p3}
\footnotesize
\begin{algorithmic}[1]
\State $\openup-1.5\jot
	\begin{aligned}[t]
		Q_2\text{Satisfied}(o, b, s) \triangleq & \exists q \in Q_2 : \forall n \in q : \exists m \in msgs : \\
		& \land m.type = \textbf{``2b''} \\
		& \land m.o = o \\
		& \land m.b = b \\
		& \land m.s = s \\
		& \land m.n = n
	\end{aligned}$
\Statex
\State \textbf{macro} p3 () \{
	\State \indent \textbf{with} ($m \in msgs$) \{
		\State \indent\indent \textbf{await} $m.type = \textbf{``2b''}$;
		\State \indent\indent \textbf{await} $m.b = \langle ballots[m.o][1], self \rangle$;
		\State \indent\indent \textbf{await} $log[m.o][m.s].c \neq \text{TRUE}$;
		\State \indent\indent \textbf{if} ($Q_2$Satisfied($m.o, m.b, m.s$)) \{
			\State \indent\indent\indent $log[m.o][m.s].c := \text{TRUE}$;
			\State \indent\indent\indent $\openup-1.5\jot
			\begin{aligned}[t]
				Send([type &\mapsto \textbf{``3''}, \\
				n &\mapsto self, \\
				o &\mapsto m.o, \\
				b &\mapsto m.b, \\
				s &\mapsto m.s, \\
				v &\mapsto log[m.o][m.s].v]); \} \} \}
			\end{aligned}$
\end{algorithmic}
\end{algorithm}

\subsection{Properties}
\label{sec:properties}


\textbf{Non-triviality.} For any node $n$, the set of committed commands is always a sequence $\sigma$ of proposed commands, i.e. $\exists \sigma : committed[n] = \bot \bullet \sigma$. Non-triviality is straightforward since nodes only start phase-1 or phase-2 for commands proposed by clients in Algorithm 1.

\textbf{Stability.} For any node $n$, the sequence of committed commands at any time is a prefix of the sequence at any later time, i.e. $\exists \sigma : committed[n] = \gamma$ {\em at} $t \implies committed[n] = \gamma\bullet\sigma$ at $t+\Delta$. Stability asserts any committed command cannot be overridden later. It is guaranteed and proven by Paxos that any leader with higher ballot number will learn previous values before proposing new slots. WPaxos inherits the same process.

\textbf{Consistency.} For any slot of any object, no two leaders can commit different values. This property asserts that object stealing and failure recovery procedures do not override any previously accepted or committed values. We verified this consistency property by model checking a TLA+ specification of WPaxos algorithm. 

WPaxos consistency guarantees are on par with other protocols, such as EPaxos, that solve the generalized consensus problem~\cite{generalizedconsensus}. Generalized consensus relaxes the consensus requirement by allowing non-interfering commands to be processed concurrently. Generalized consensus no longer enforces a totally ordered set of commands. Instead only conflicting commands need to be ordered with respect to each other, making the command log a partially ordered set. WPaxos maintains separate logs for every object and provides per-object linearizability. 

\textbf{Liveness.} A proposed command $\gamma$ will eventually be committed by all non-faulty nodes $n$, i.e. $\diamond \forall n \in \text{Nodes}: \gamma \in committed[n].$ The PlusCal code presented in Section~\ref{sec:algorithm} specifies what actions each node is allowed to perform, but not when to perform, which affects liveness. The liveness property satisfied by WPaxos algorithm is same to that of ordinary Paxos: as long as there exists $q_1 \in Q_1$ and $q_2 \in Q_2$ are alive, the system will progress.

\section{Extensions}
\label{sec:extensions}

\subsection{Locality Adaptive Object Stealing}
\label{sec:adaptive}

\begin{figure}[t]
\centering
\begin{subfigure}{0.48\columnwidth}
\includegraphics[width=\linewidth]{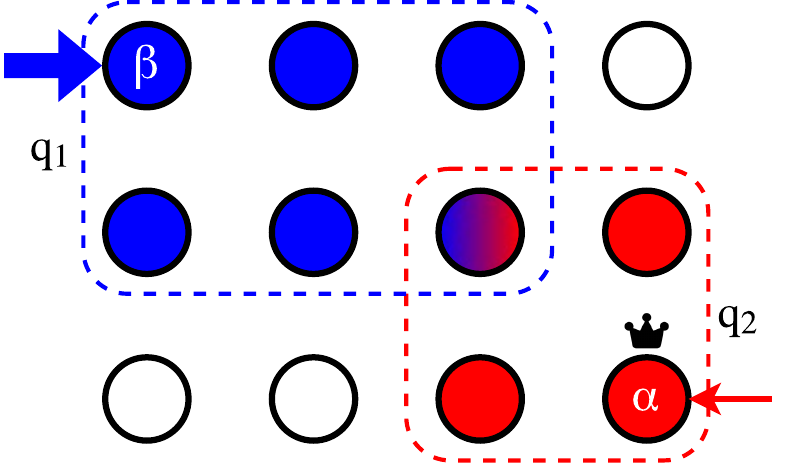}
\caption{}
\end{subfigure}
\hspace*{\fill} 
\begin{subfigure}{0.48\columnwidth}
\includegraphics[width=\linewidth]{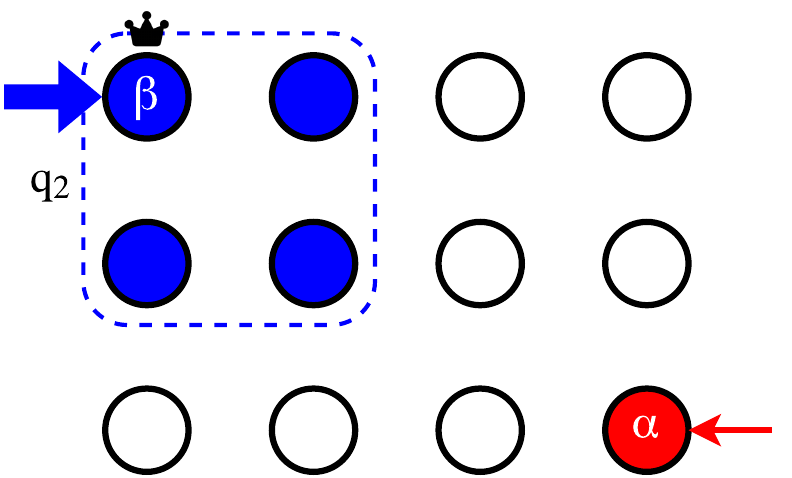}
\caption{}
\end{subfigure}
\caption{(a) Initial leader $\alpha$ observes heavy cross-region traffic from node $\beta$, thus triggers $\beta$ to start phase-1 on its $q_1$.  (b) $\beta$ becomes new leader and benefits more on the workload.}
\label{fig:leader_election}
\end{figure}


The basic protocol migrates the object from a remote region to a local region upon the first request, but that causes a performance degradation when an object is frequently accessed across many zones.
With locality adaptive object stealing we can delay or deny the object transfer to a zone issuing the request based on an object migration policy. The intuition behind this approach is to move objects to a zone whose clients will benefit the most from not having to communicate over WAN, while allowing clients accessing the object from less frequent zones to get their requests forwarded to the remote leader.

Our \textit{majority-zone} migration policy aims to improve the locality of reference by transferring the objects to zones that sending out the highest number of requests for the objects, as shown in Figure~\ref{fig:leader_election}. Since the current object leader handles all the requests, it has the information about which clients access the object more frequently. If the leader $\alpha$ detects that the object has more requests coming from a remote zone, it will initiate the object handover by communicating with the node $\beta$, and in its turn $\beta$ will start the phase-1 protocol to steal the leadership of that object.

\subsection{Replication Set}
\label{sec:rq2}

WPaxos provides flexibility in selecting a replication set. The phase-2 (p2a) message need not be broadcast to the entire system, but only to a subset of $Q_2$ quorums, denoted as a replication $Q_2$ or $RQ_2$. The user has the freedom to choose the replication factor across zones from the minimal required $F+1$ zones up to the total number of $Z$ zones. Such choice can be seen as a trade off between communication overhead and a more predictable latency, since the replication zone may not always be the fastest to reply. Additionally, if a node outside of the $RQ_2$ becomes the new leader of the object, that may delay the new phase-2 as the leader need to catch up with the missing logs in previous ballots. One way to minimize the delay is let the $RQ_2$ reply on phase-2 messages for replication, while the potential leader nodes learn the states as non-voting learners.

\begin{figure}[t]
\vspace{5mm}
\centering
\includegraphics[width=0.8\columnwidth]{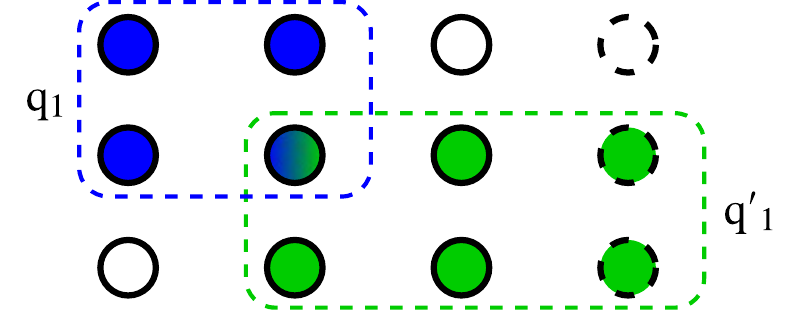}
\vspace{2mm}
\caption{Adding new zone (dashed)}
\label{fig:reconfig}
\vspace{5mm}
\end{figure}

\subsection{Fault Tolerance and Reconfiguration}
\label{sec:fault-tolerance}

WPaxos can make progress as long as it can form valid $q_1$ and $q_2$ quorums. The flexibility of WPaxos enables the user to deploy the system with quorum configuration tailored to their needs. Some configurations are geared towards performance, while others may prioritize fault tolerance.
By default, WPaxos configures the quorums to tolerate one zone failure and minority node failures per zone, and thus provides similar fault tolerance as Spanner with Paxos groups deployed over three zones.

WPaxos remains partially available when more zones fail than the tolerance threshold it was configured for. In such a case, no valid $q_1$ quorum may be formed, which halts the object stealing routine, however the operations can proceed for objects owned in the remaining live regions, as long as there are enough zones left to form a $q_2$ quorum. 

\label{sec:reconfig}


The ability to reconfigure, i.e., dynamically change the membership of the system, is critical to provide reliability for long periods as it allows  crashed nodes to be replaced.
WPaxos achieves high throughput by allowing \emph{pipelining} (like Paxos and Raft algorithms) in which new commands may begin phase-2 before any previous instances/slots have been committed. Pipelining architecture brings more complexity to reconfiguration, as there may be another reconfiguration operation in the pipeline which could change the quorum and invalidate a previous proposal. Paxos~\cite{paxosmadecomplex} solves this by limiting the length of the pipeline window to $\alpha > 0$ and only activating the new config $C'$ chosen at slot $i$ until slot $i+\alpha$. Depending on the value of $\alpha$, this approach either limits throughput or latency of the system. On the other hand, Raft \cite{raft} does not impose any limitation of concurrency and proposes two solutions. The first solution is to restrict the reconfiguration operation, i.e. what can be reconfigured. For example, if each operation only adds one node or removes one node, a sequence of these operations can be scheduled to achieve arbitrary changes. The second solution is to change configuration in two phases: a union of both old and new configuration $C+C'$ is proposed in the log first, and committed by the quorums combined. Only after the commit, the leader may propose the new config $C'$. During the two phases, any election or command proposal should be committed by quorum in both $C$ and $C'$.
To ensure safety during reconfiguration, all these solutions essentially prevent two configurations $C$ and $C'$ to make decision at the same time that leads to divergent system states.

WPaxos adopts the more general two-phase reconfiguration procedure from Raft for arbitrary $C'$s, where $C = \langle Q_1, Q_2 \rangle$, $C' = \langle Q_1', Q_2' \rangle$. WPaxos further reduces the two phases into one in certain special cases since adding and removing one zone or one row operations are the most common reconfigurations in the WAN topology. These four operations are equivalent to the Raft's first solution because the combined quorum of $C+C'$ is equivalent to quorum in $C'$. We show one example of adding new zone of dashed nodes in the Figure~\ref{fig:reconfig}.

Previous configuration $Q_1$ involves two zones, whereas the new config $Q_1'$ involves three zones including the new zone added. The quorums in $Q_1'$ combines quorums in $Q_1$ is same as $Q_1'$. Both $Q_2$ and $Q_2'$ remains the same size of two zones.
The general quorum intersection assumption and the restrictions $Q_1' \cup Q_1 = Q_1'$ and $Q_2' \cup Q_2 = Q_2'$ ensure that old and new configuration cannot make separate decisions and provides same safety property.

\section{Evaluation}
\label{sec:eval}

We developed a general framework, called \textit{Paxi} to conduct our evaluation. The framework allows us to compare WPaxos, EPaxos, M2Paxos and other Paxos protocols in the same controlled environment under identical workloads. We implemented Paxi along with WPaxos and EPaxos in Go and released it as an open-source project on GitHub at \url{https://github.com/ailidani/paxi}.
The framework provides extended abstractions to be shared between all Paxos variants, including location-aware configuration, network communication, client library with RESTful API, and a quorum management module (which accommodates majority quorum, fast quorum, grid quorum and flexible quorum). Paxi's networking layer encapsulates a message passing model and exposes basic interfaces for a variety of message exchange patterns, and transparently supports TCP, UDP and simulated connection with Go channels. Additionally, our Paxi framework incorporates mechanisms to facilitate the startup of the system by sharing the initial parameters through the configuration management tool.


\subsection{Setup}
\label{sec:setup}

\begin{figure}
\centering
\includegraphics[width=0.8\columnwidth]{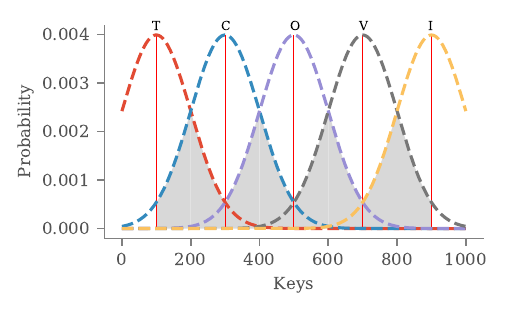}
\caption{Locality Workload}
\label{fig:loc}
\end{figure}

We evaluated WPaxos using the key-value store abstraction provided by our Paxi framework. We used AWS EC2 \cite{ec2} nodes to deploy WPaxos across 5 different regions: Tokyo (\textbf{T}), California (\textbf{C}), Ohio (\textbf{O}), Virginia (\textbf{V}), and Ireland (\textbf{I}).  In our experiments, we used 4 m5.large instances at each AWS region to host 3 WPaxos nodes and 20 concurrent clients. WPaxos in our experiments uses adaptive mode by default, unless otherwise noted.

We conducted all experiments with Paxi microbenchmark. Paxi provides similar benchmarking capabilities as YCSB~\cite{ycsb}, with both benchmarks generating similar workloads. However, Paxi benchmark adds more tuning knobs to facilitate testing in wide area with workloads that exhibit different conflict and locality characteristics.

In order to simulate workloads with tunable access locality patterns we used a normal distribution to control the probability of generating a request on each object. As shown in the Figure~\ref{fig:loc}, we used a pool of 1000 common objects, with the probability function of each region denoting how likely an object is to be selected at a particular zone. Each region has a set of objects it is more likely to access. We define \textbf{locality} as the percentage of the requests pulled from such set of likely objects.
We introduce locality to our evaluation by drawing the conflicting keys from a Normal distribution $\mathcal{N}(\mu, \sigma^2)$, where $\mu$ can be varied for different zones to control the locality, and $\sigma$ is shared between zones. The locality can be visualized as the non-overlapping area under the probability density functions in Figure~\ref{fig:loc}.

\begin{definition}
\textit{\textbf{Locality}} $L$ is the complement of the overlapping coefficient (OVL)\footnote{The overlapping coefficient (OVL) is a measurement of similarity between two probability distributions, refers to the shadowed area under two probability density functions simultaneously \cite{ovl}.} among workload distributions: $L = 1 - \widehat{OVL}$.
\end{definition}

Let $\Phi(\frac{x-\mu}{\sigma})$ denote the cumulative distribution function (CDF) of any normal distribution with mean $\mu$ and deviation $\sigma$, and $\hat{x}$ as the x-coordinate of the point intersected by two distributions, locality is given by $L = \Phi_1(\hat{x}) - \Phi_2(\hat{x})$. At the two ends of the spectrum, locality equals to 0 if two overlapping distributions are congruent, and locality equals to 1 if two distributions do not intersect.

\subsection{Object Space}

\begin{figure}[t]
\centering
\includegraphics[width=0.9\columnwidth]{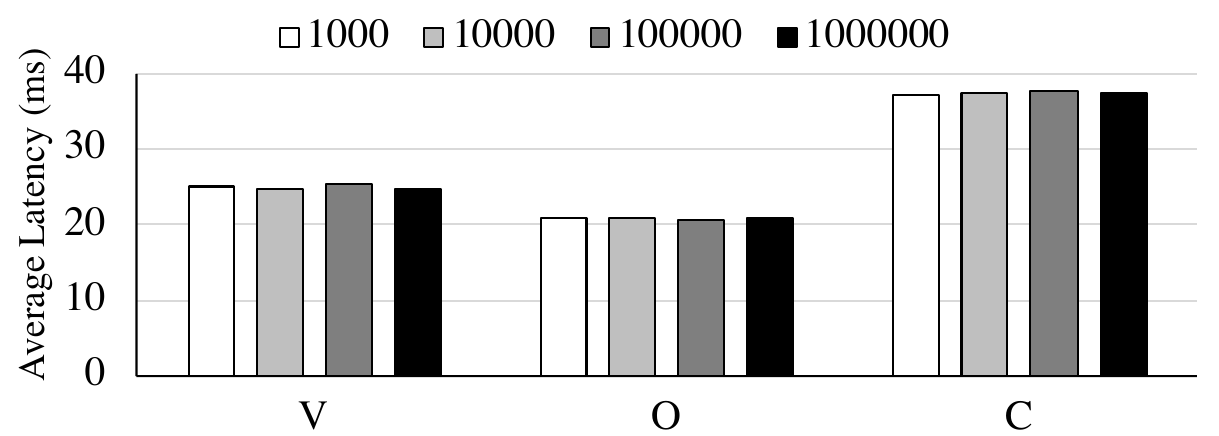}
\caption{Average latency for uniformly random workload with increasing number of objects.}
\label{fig:wp_keys}
\end{figure}

We begin by presenting our evaluation of the overhead with increasing number of objects in WPaxos system. Every object in WPaxos is fully replicated. We preload the system with one thousand to one million keys evenly distributed among three regions (Virginia, Oregon and California), then generate requests with random key from every region. To evaluate the performance impact, we measure the average latency in each one of the three regions.

The results shown in Figure \ref{fig:wp_keys} indicates there are no significant impacts on request latency. This is expected since a hash map index has O(1) lookup time to keep track of object and its current leader. The index data does not consume extra memory because the leader ID is already maintained in the ballot number from last Paxos log entry. At the end of our experiment, one million keys without log snapshots and garbage collection consumes about 1.6 GB memory out of our 8 GB VM. For more keys inserted into the system, we expect steady performance as long as they fit into the memory.

\begin{figure}[t]
\centering
\includegraphics[width=\columnwidth]{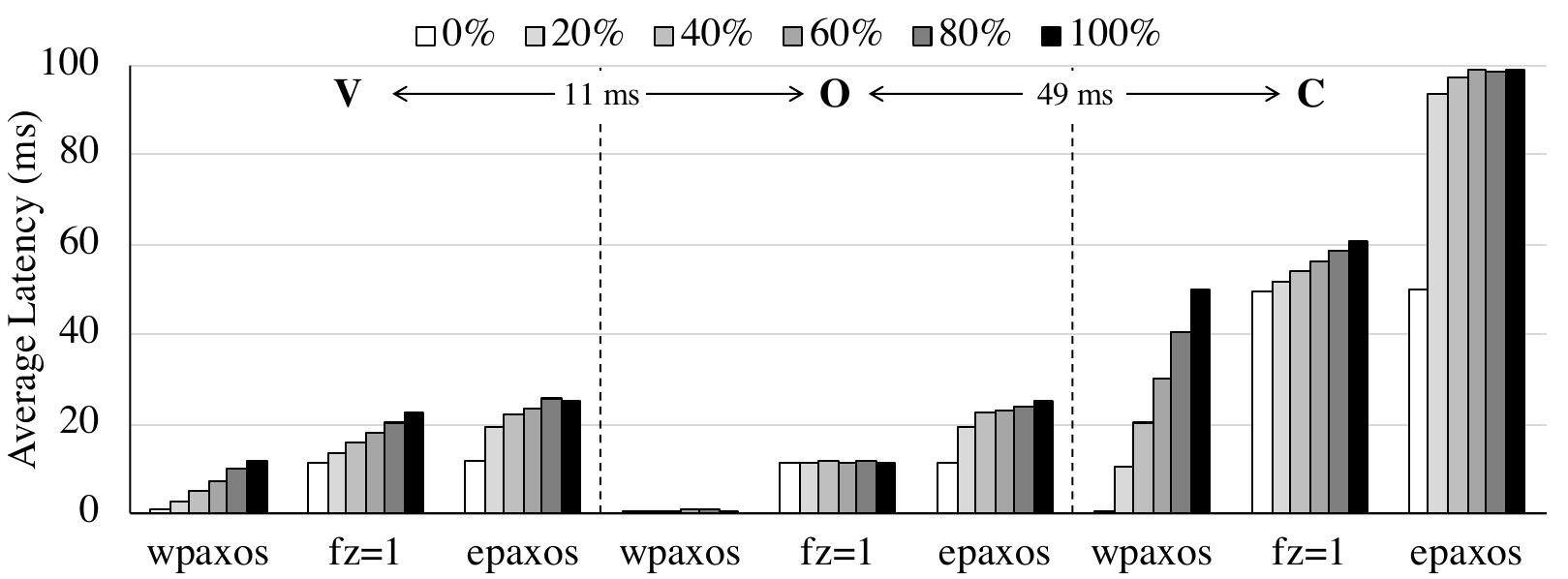}
\caption{Average latency with increasing conflict ratio for WPaxos ($f_z=0$ and $f_z=1$) and EPaxos}
\label{fig:wp_latency}
\end{figure}

\begin{figure}[h]
\centering
\includegraphics[width=0.9\columnwidth]{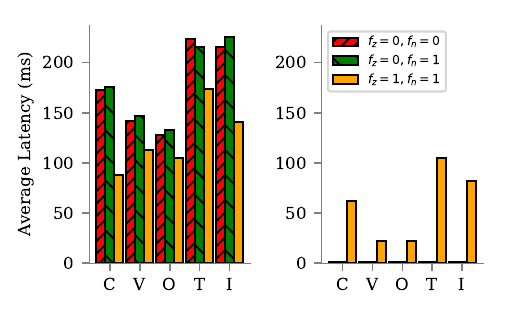}
\caption{Average latency for phase-1 (left) and phase-2 (right) in different quorum systems.}
\label{fig:quorum}
\end{figure}

\subsection{WPaxos Quorum Latencies}

In this set of experiments, we compare the latency of $Q_1$ and $Q_2$ accesses in three different fault tolerance configurations: ($f_z\!=\!f_n\!=\!0$), ($f_z\!=\!0; f_n\!=\!1$), and ($f_z\!=\!f_n\!=\!1$). The configurations with $f_n\!=\!0$ uses a single node per zone/region for $Q_1$, requiring all nodes in one zone/region to form $Q_2$, while configurations with $f_n\!=\!1$ require one fewer node in $Q_2$. When $f_z\!=\!0$, $Q_1$ uses all 5 zones and $Q_2$ remains in a single region. With $f_z\!=\!1$, $Q_1$ uses 4 zones which reduce phase-1 latency significantly, but $Q_2$ requires 2 zones/regions thus exhibits WAN latency. In each region we simultaneously generated a fixed number (1000) of phase-1 and phase-2 requests, and measured the latency for each phase. Figure~\ref{fig:quorum} shows the average latency in phase-1 (left) and phase-2 (right).

Quorum size of $Q_1$ in $f_n\!=\!1$ configurations is half of that for WPaxos with $f_n\!=\!0$, but both experience average latency of about one round trip to the farthest peer region, since the communication happens in parallel. Within a zone, however, $f_n\!=\!1$ can tolerate one straggler node, reducing the latency for the most frequently used $Q_2$ quorum type.



\subsection{Conflicting Commands}

Here we evaluate the performance of WPaxos in terms of conflicting commands.  In WPaxos, we treat any two commands operating on the same object in the Paxi key-value store as conflicting. In our experiments, the workload ranges from 0\% conflicts (i.e. all requests are completely local to its leader) to 100\% conflicts (i.e. every request targets the conflicting object). 

While WPaxos can only denote object-based conflicts and non-conflicts, EPaxos can denote operation-based conflicts and non-conflicts in the general case. In the context of our experiments on the Paxi key-value store, for EPaxos, we treat any two update operations on the same object as conflicting, and treat read operations on any object as nonconflicting with any other read operations. This is the same setup used in the evaluation of the EPaxos paper~\cite{epaxos}. 

As shown in Figure \ref{fig:wp_latency}, WPaxos without zone-failure-tolerance ($f_z\!=\!0$) performance better than WPaxos that tolerate one zone failure ($f_z\!=\!1$) in every case, because $Q_2$ within a region avoids the RTT between neighboring regions for non-conflicting commands. Since the Ohio region is located in the relative center of our topology, it becomes the leader of conflicting objects and the performance in that region becomes independent of conflicts. More interestingly, even though both WPaxos $f_z\!=\!1$ and EPaxos requires involving two RTTs for conflicting commands, WPaxos is able to reduce the latency by committing requests with two closer regions. For example, in 100\% conflict workload, requests from C is committed by one RTT between C and O (49ms) plus one RTT between V and O (11ms) instead of two RTTs of CO like EPaxos.


\begin{figure*}[t]
\centering
\begin{subfigure}{0.32\textwidth}
	\includegraphics[width=\linewidth]{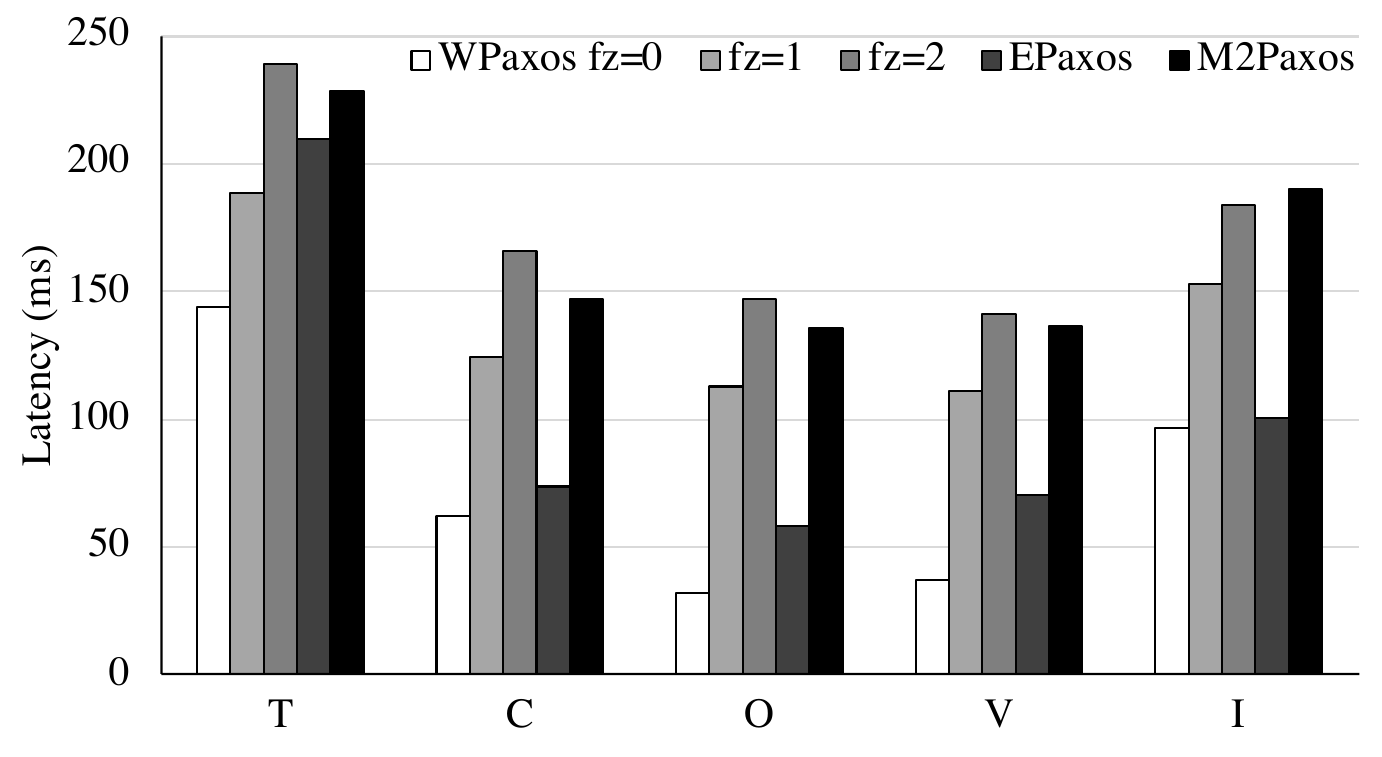}
	\caption{Uniformly random workload}
	\label{fig:random_lat}
\end{subfigure}
\hspace*{\fill} 
\begin{subfigure}{0.32\textwidth}
	\includegraphics[width=\linewidth]{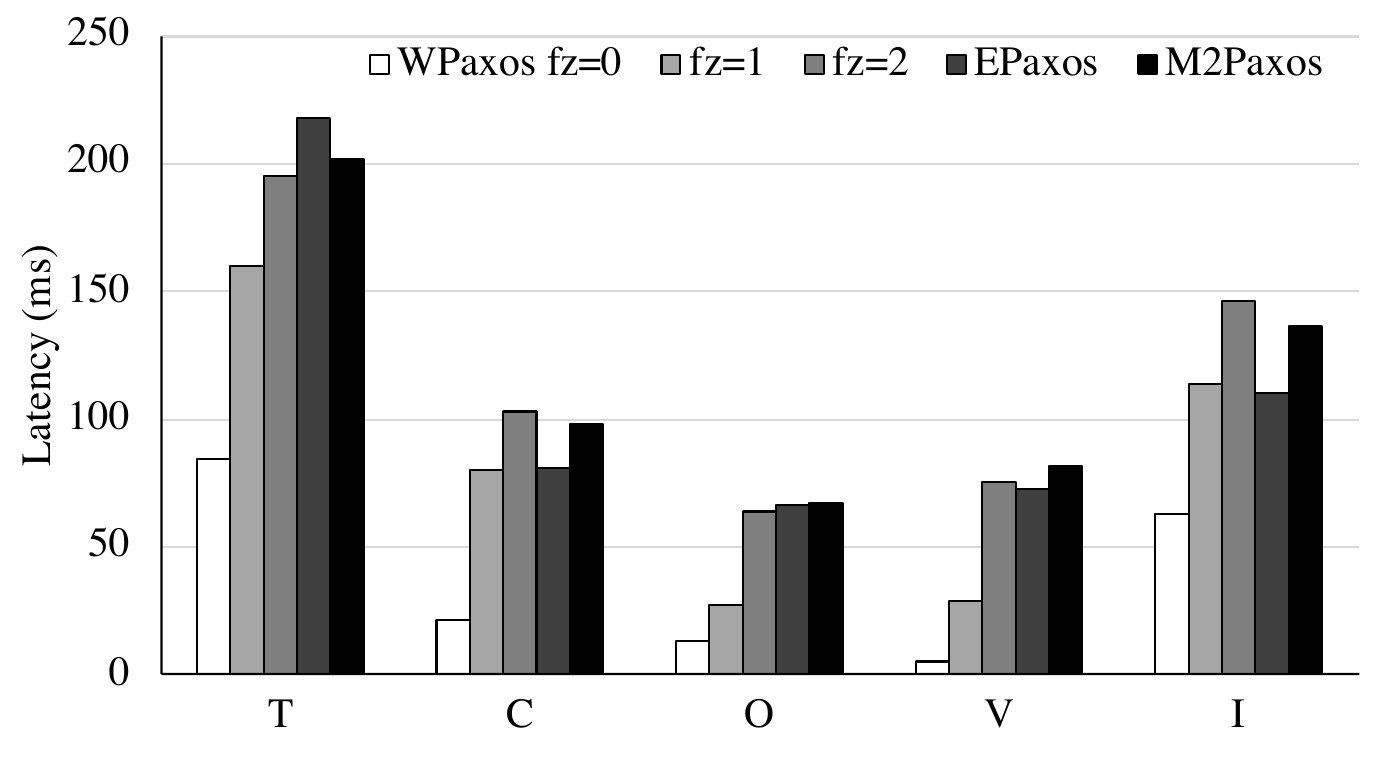}
	\caption{Medium Locality ($\sigma=100$) workload}
	\label{fig:locality_lat70}
\end{subfigure}
\hspace*{\fill} 
\begin{subfigure}{0.32\textwidth}
	\includegraphics[width=\linewidth]{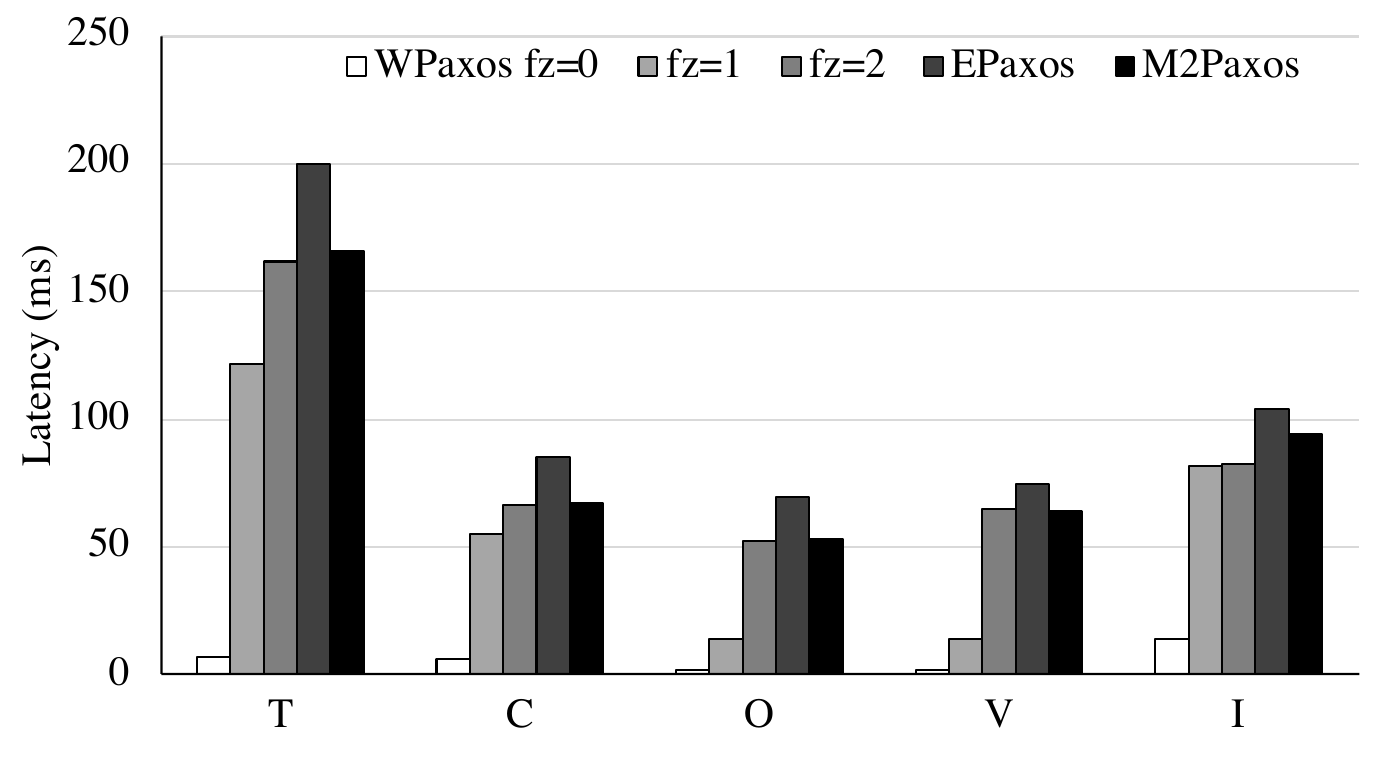}
	\caption{High Locality ($\sigma=50$) workload}
	\label{fig:locality_lat90}
\end{subfigure}
\caption{Average latencies in different regions.}
\end{figure*}

\begin{figure}[t]
\centering
\includegraphics[width=0.8\linewidth]{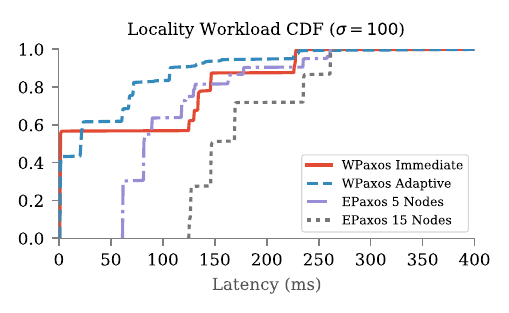}
\caption{CDF of 70\% locality workload WPaxos $f_z=0$ immediate/adaptive mode and EPaxos}
\label{fig:cdf}
\end{figure}

\subsection{Latency Comparison}

We compare the latency of WPaxos (with $f_z=0,1,2$), EPaxos, and M2Paxos protocols using three sets of workloads: random (Figure \ref{fig:random_lat}), $\sim$70\% locality (Figure \ref{fig:locality_lat70}), and $\sim$95\% locality (Figure \ref{fig:locality_lat90}). Before each round of experiment, we divide 1000 objects and preload them into each region, such that  every region owns 200 objects according to Figure \ref{fig:loc}. For each experiment, the clients in each region generate requests concurrently within the duration of one minute. 
 
Figure~\ref{fig:random_lat} compares the average latency of random workload in 5 regions. Each region experiences different latencies due to their asymmetrical location in the geographical topology. WPaxos in $f_z\!=\!1$ and $f_z\!=\!2$ tolerance configurations show higher latency than EPaxos because requests are forwarded to leaders in other regions most of the time and causes extra wide area RTT, whereas EPaxos initiates PreAccept phase by any local leader. WPaxos with $f_z=0$ performs better than all other protocols due to its local phase 2 quorums.


Figure \ref{fig:locality_lat70} shows that, under $\sim$70\% locality workload ($\mathcal{N}(\mu_z, \sigma=100)$), regions located close to the geographic center improve their average latencies. Given the wide standard deviation of accessing keys, EPaxos experiences slightly higher conflict rate, and WPaxos and M2Paxos still experience request forwarding to remote leaders. When the phase 2 quorums cover the same number of regions, all three protocols show a similar average latency. Since WPaxos provides more flexibility in configuring the fault tolerance factor, WPaxos $f_z=0, 1$ outperforms all other protocols in all regions.



In Figure \ref{fig:locality_lat90}, we increase the locality to $\sim$95\% ($\mathcal{N}(\mu_z, \sigma=50)$). EPaxos shows similar pattern as previous experiments, whereas WPaxos achieves much lower latency by avoiding WAN forwarding largely in all regions.

Figure \ref{fig:cdf} shows the tail latencies caused by object stealing in WPaxos immediate and adaptive modes and compares them with EPaxos with 5 and 15 node deployments. In the figure, all request latencies from every region are aggregated to produce the cumulative distribution (CDF). Using WPaxos immediate, the edge regions suffer from high object stealing latencies because their Q1 latencies are longer due to their location in the topology. WPaxos adaptive alleviates and smoothens these effects. Even under low locality, about half of the requests are committed in local-area latency in WPaxos.

\subsection{Throughput Comparison}

We experiment on scalability of WPaxos with respect to the number of requests it processes by driving a steady workload at each zone. Instead of the {\em medium} instances, we used a cluster of 15 {\em large} EC2 nodes to host WPaxos deployments. EPaxos is hosted at the same nodes, but with only one EPaxos node per zone. We opted out of using EPaxos with 15 nodes, because our preliminary experiments showed significantly higher latencies with such a large EPaxos deployment.
We limit WPaxos deployments to a single leader per zone to be better comparable to EPaxos. We gradually increase the load on the systems by issuing more requests and measure the  latency at each of the throughput levels. Figure~\ref{fig:tp_vs_lat} shows the latencies as the aggregate throughput increases.

At low load, we observe both immediate and adaptive WPaxos significantly outperform EPaxos as expected. With relatively small number of requests coming through the system, WPaxos has low contention for object stealing, and can perform many operations locally within a region. As the number of requests increases and contention rises, performance of both EPaxos and WPaxos with immediate object stealing deteriorates. EPaxos suffers from high conflict in WAN degrading its performance further. This is because cross-datacenter latencies increase commit time, resulting in higher conflict probability. At high conflict, Paxos goes into two-phase operation, which negatively impacts both latency and maximum throughput.

\begin{figure}[t]
	\centering
	\begin{subfigure}{0.49\columnwidth}
	\includegraphics[width=\linewidth]{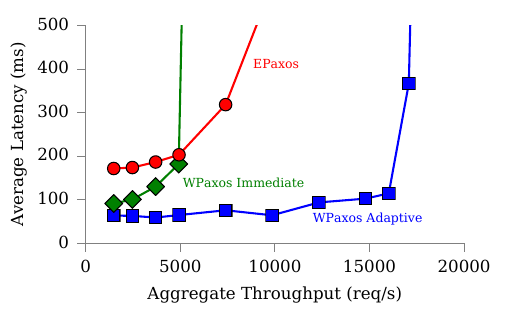}
	\end{subfigure}
	\begin{subfigure}{0.49\columnwidth}
	\includegraphics[width=\linewidth]{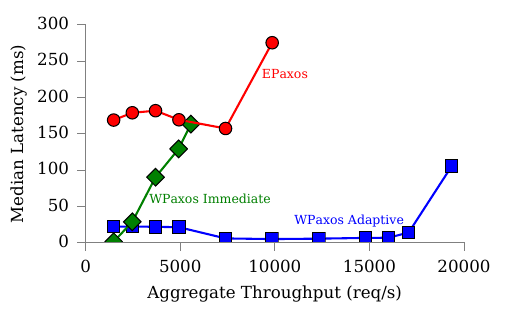}
	\end{subfigure}
	\caption{Request latency as the throughput increases.}
	\label{fig:tp_vs_lat}
\end{figure}


Immediate WPaxos suffers from leaders competing for objects with neighboring
regions, degrading its performance faster than EPaxos.
Median request latency graph in Figure~\ref{fig:tp_vs_lat} clearly illustrates this deterioration.
This behavior in WPaxos with immediate object
stealing is caused by dueling leaders: as two nodes in neighboring zones try to
acquire ownership of the same object, each restarts phase-1 of the protocol
before the other leader has a chance to finish its phase-2. 

On the other hand, WPaxos in adaptive object stealing mode scales better and shows almost no degradation until it starts to reach the CPU and networking limits of individual instances. Adaptive WPaxos median latency actually slightly decreases under the medium workloads, while EPaxos shows gradual latency increase. At the workload of 10000 req/s adaptive WPaxos outperforms EPaxos 9 times in terms of average latency and 54 times in terms of median latency. 


\subsection{Shifting Locality Workload}

Many applications in the WAN setting may experience workloads with shifting access patterns such as diurnal patterns~\cite{adapt_data_migration, gmach2007workload}.
Figure~\ref{fig:shifting} illustrates the effects of shifting locality in the workload on WPaxos and statically key-partitioned Paxos (KPaxos). KPaxos starts in the optimal state with most of the requests done on the local objects. 
When the access locality is gradually shifted by changing the mean of the locality distributions at a rate of 2 objects/sec, the access pattern shifts further from optimal for statically partitioned Paxos, and its latency increases. 
WPaxos, on the other hand, does not suffer from the shifts in the locality. The adaptive algorithm slowly migrates the objects to regions with more demand, providing stable and predictable performance under shifting access locality.

\begin{figure}[t]
\centering
\includegraphics[width=0.7\columnwidth]{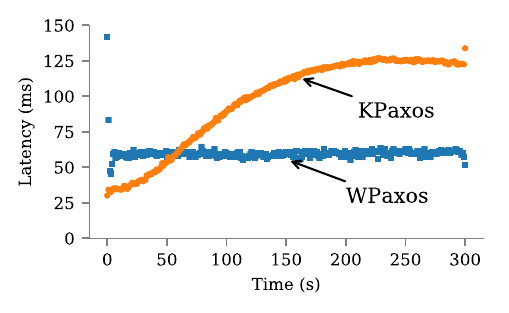}
\caption{The average latency in each second.}
\label{fig:shifting}
\end{figure}


\subsection{Fault Tolerance}

\begin{figure}[t]
\centering
\begin{subfigure}{\columnwidth}
    \includegraphics[width=\linewidth]{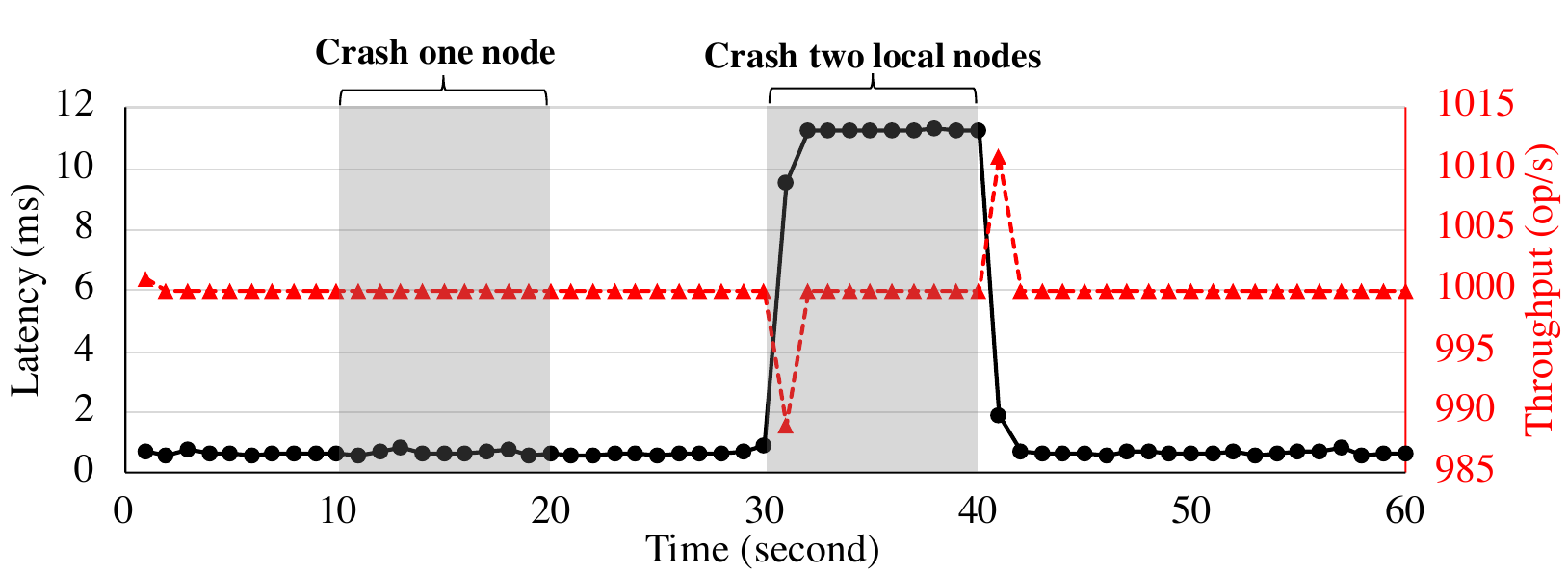}
    \caption{WPaxos ($f_z=0$) crash one and two local nodes in a zone}
    \label{fig:wp_crash_f0}
\end{subfigure}
\begin{subfigure}{\columnwidth}
    \includegraphics[width=\linewidth]{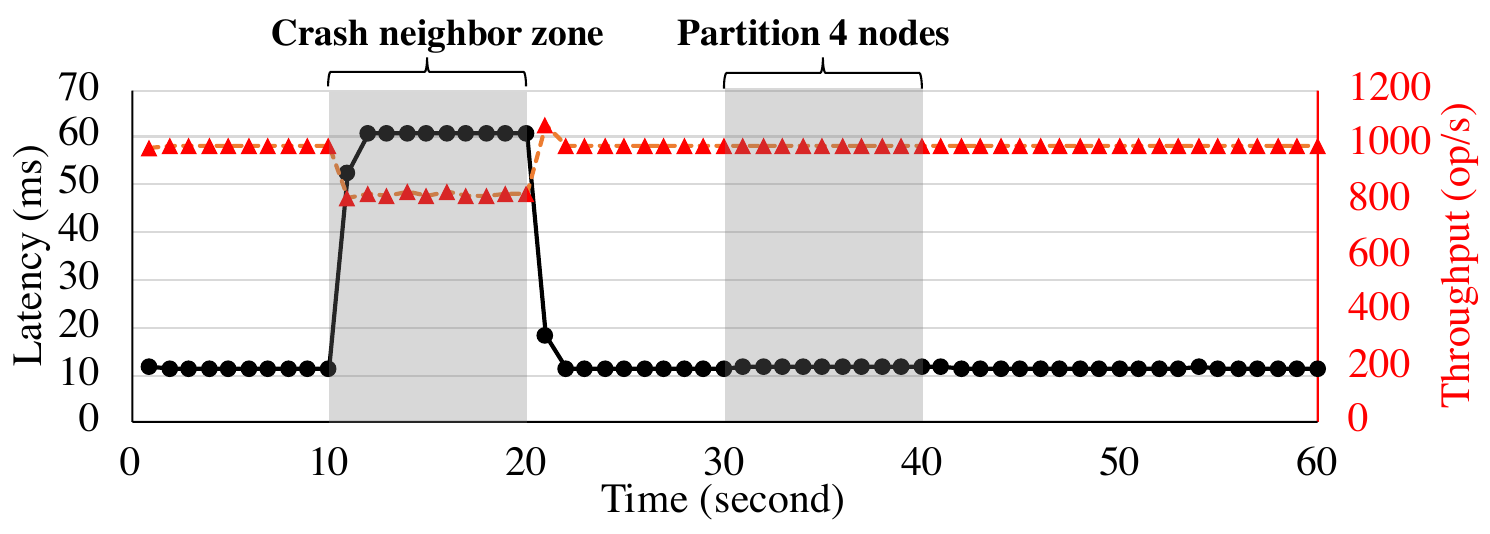}
    \caption{WPaxos ($f_z=1$) crash one zone and partition 4 nodes out of 9}
    \label{fig:wp_crash_f1}
\end{subfigure}
\caption{WPaxos availability}
\end{figure}

In this section we evaluate WPaxos availability by using Paxi framework fault injection API to introduce different failures and measure latency and throughput of normal workload in every second. Every fault injection will last for 10 seconds and recover.

Figure \ref{fig:wp_crash_f0} shows the result of first deployment where $f_n = 1$ and $f_z = 0$. The throughput and latency is measured in region V. For the first 10 seconds under normal operation, the latency and throughput is steady at less than 1 millisecond and 1000 operations/second respectively. We crash one node in region V first, it does not have any effect on performance since $|q_2|=2$ out of 3 nodes in that region. At 30th second, we crash two local nodes so that a local $q_2$ cannot be formed. The requests has to wait for two acks from neighboring region O, which introduce additional 11 ms RTT to the latency.

Figure \ref{fig:wp_crash_f1} shows the results of a same deployment but $f_z = 1$ where we can tolerate any one zone failure. The latency remains at 11 ms as $q_2$ requires 2 nodes from both V and O. Until 10th second, we crash region O entirely, The leader has to wait for acks from region C and latency become 60 ms. When region O recovers, we partitioned 4 nodes as the minority from the system of 9 nodes. The 4 nodes including 3 nodes from C and one node from O. As expected, such partition does not have any effect on system performance.

In all above failures, WPaxos always remain available.

\section{Concluding Remarks}
\label{sec:concl}
WPaxos achieves fast wide-area coordination by dynamically partitioning the objects across multiple leaders that are strategically deployed using flexible quorums. Such partitioning and emphasis on local operations allow our protocol to significantly outperform other WAN Paxos solutions.
Since the object stealing is an integrated part of phase-1 of Paxos, WPaxos remains simple as a pure Paxos flavor and obviates the need for another service/protocol for relocating objects to zones. Since the base WPaxos protocol guarantees safety to concurrency, asynchrony, and faults, the performance can be tuned orthogonally and aggressively. In future work, we will investigate smart object stealing policies that can proactively move objects to zones with high demand. We will also investigate implementing transactions more efficiently leveraging WPaxos optimizations.

\ifCLASSOPTIONcompsoc
  \section*{Acknowledgments}
\else
  \section*{Acknowledgment}
\fi
This project is in part sponsored by the National Science Foundation (NSF) under award number CNS-1527629.

\bibliographystyle{IEEEtran}
\bibliography{murat,ailidani,acharapk,tevfik}

\end{document}